\def\BibTeX{{\rm B\kern-.05em{\sc i\kern-.025em b}\kern-.08em
    T\kern-.1667em\lower.7ex\hbox{E}\kern-.125emX}}
\newcommand{\R}{\mathbb{R}}
\newcommand{\any}{\text{ $\forall$ }}
\newcommand\m[1]{\begin{bmatrix}#1\end{bmatrix}}
\algnewcommand\algorithmicforeach{\textbf{for each}}
\newcommand{\X}{\mathcal{X}}
\newcommand{\D}{\mathcal{D}}
\newcommand{\U}{\mathcal{U}}
\renewcommand{\P}{\mathcal{P}}
\newcommand{\M}{\mathcal{M}}
\renewcommand{\r}{\mathcal{R}}
\newcommand{\Xinf}{\mathcal{X}^{(\infty)}}
\newcommand{\Y}{\mathcal{Y}}
\newcommand{\N}{\mathcal{N}}
\newcommand{\A}{\mathcal{A}}
\newcommand{\B}{\mathcal{B}}
\renewcommand{\S}{\mathcal{S}}
\newcommand{\T}{\mathcal{T}}
\newcommand{\Q}{\mathcal{Q}}
\theoremstyle{definition}
\newtheorem{definition}{Definition}
\newtheorem{theorem}{Theorem}
\newtheorem*{theorem*}{Theorem}
\newtheorem{lemma}{Lemma}
\DeclareMathOperator*{\argmin}{arg\,min}
\begin{document}

\title{Consensus-Based Set-Theoretic Control in Power Systems\\
\thanks{This work is partially supported by the National Science Foundation Graduate Research Fellowship Program under Grant No. DGE-1762114 and NSF grant ECCS-1930605. Any opinions, findings, conclusions, or recommendations expressed in this material are those of the authors and do not necessarily reflect the views of the National Science Foundation.}
}

\author{\IEEEauthorblockN{Daniel Tabas}
\IEEEauthorblockA{\textit{Dept. of Electrical and Computer Engineering} \\
\textit{University of Washington}\\
Seattle, WA, United States\\
dtabas@uw.edu}
\and
\IEEEauthorblockN{Baosen Zhang}
\IEEEauthorblockA{\textit{Dept. of Electrical and Computer Engineering} \\
\textit{University of Washington}\\
Seattle, WA, United States \\
zhangbao@uw.edu}
}

\maketitle

\begin{abstract}

Set-theoretic control is a useful technique for dealing with the uncertainty introduced into power systems by renewable energy resources. Although set operations are computationally expensive in large systems, distributed approaches serve as a remedy. In this paper, we propose a novel consensus-based approach for set-theoretic frequency control in power systems. A robust controlled-invariant set (RCI) for the system is generated by composing RCIs for each bus in the network. The process of generating these sets uses a consensus-based approach in order to facilitate discovery of mutually compatible subsystem RCIs. Each bus seeks to maximize the size of its own RCI while treating the effects of coupling as an unknown-but-bounded disturbance. The consensus routine, which demonstrates linear convergence, is embedded into a backwards reachability analysis of initial safe sets. Results for a 9-bus test case show that simple model predictive controllers associated with the resulting RCIs maintain safe operation when the system is subjected to worst case (adversarial) fluctuations in net demand, where conventional controllers are shown to fail.
\end{abstract}


\section{Introduction}
The power system is a quintessential example of safety-critical infrastructure. Safe operation of power systems involves remaining in a safe region (or a safe set) defined, for example, by a range of acceptable voltages or frequencies at each bus. Conventionally, the credible contingencies that can lead to power system failure are enumerated and simulated to ensure that the current operating point is safe \cite{Bumby2009}. These contingencies are discrete events such as line or generator faults, while continuous changes in load or generation are assumed to be small and do not impact the stability of the grid. However, with the growth in variable renewable energy resources, fluctuations in net demand are increasing while system mechanical inertia is decreasing. Therefore, in addition to faults, fluctuations in renewable energy supply can cause stability issues for the system \cite{Kroposki2017}.


Uncertainties in net load can be viewed as disturbances in power injections. Unfortunately, it is difficult to provide probabilistic descriptions of these disturbances, since they are often correlated spatially and do not follow simple distributions~\cite{Chen18}. In contrast, the bounds on the disturbances are much easier to determine from physical system characteristics and data~\cite{Zhao18}. Therefore, a different point of view is to assume that disturbances take an ``unknown-but-bounded'' form. 

Even though renewable resources lead to larger disturbances, they also provide greater control capabilities. Power electronic inverters can realize a wide range of control actions subject to power and energy constraints. Therefore, the question becomes how to best design bounded controllers simultaneously for all possible bounded disturbances. Set-based control arises naturally as a solution to this problem. The objective is to keep the system states in a safe set subject to disturbances that can take on unknown but bounded values \cite{Blanchini2015}. 




The main challenge in using set-based methods is the computational cost of finding a safe set and its associated controller. One common approach is to propagate sets forwards or backwards through the system dynamics in order to draw conclusions about reachability, safety, and set invariance \cite{Blanchini2015,Rakovic2008,Trodden2016}. This method scales poorly with the system's dimension, making it nontrivial to apply to power systems composed of many synchronous generators and distributed energy resources. The standard technique to resolve this ``curse of dimensionality'' is to restrict the safety sets to be ellipsoids \cite{Kurzhanskiy2007} or zonotopes \cite{Althoff2011}. While such methods are computationally efficient, the resulting sets tend to be overly conservative. 



This paper addresses the computational challenge of using set-theoretic control for power system frequency stability. Specifically, we consider linearized swing equations and present a model predictive control (MPC) method that guarantees safe operation. By characterizing the control invariant set of the system, we show that simple, completely decentralized MPCs can be solved to ensure that the frequency and angle at a bus never leave a prescribed operating region. We do this through two contributions. The first is that we adopt a \emph{distributed approach}, where the computation of safe sets is viewed as a consensus problem. Therefore only two-dimensional sets (frequency and angle) need to be considered at each iteration, and they can be handled using existing algorithms.  Second, we consider a controller that \emph{maximizes the stability region}. This maximizes robustness to initial conditions and disturbances and offers the best performance in the stability sense.


The compositional approach to computing safe sets is not new and has been adopted in, for example,~\cite{Riverso2013,Riverso2014,Nilsson2016,Rakovic2010a} and their references. However, these works often compute the minimum safe set rather than the maximum. A smaller safe set implies better performance in the form of, for example, smaller frequency deviations. But as the disturbances grow in magnitude, we argue it is also important to provide the maximum region within which the system can safely operate. 

The rest of the paper is organized as follows. Section \ref{sec:Preliminaries} presents notation and fundamental concepts. Section \ref{sec:Model} provides details about the power system model. Section \ref{sec:Sets} presents the algorithms used to compute controlled-invariant sets, while Section \ref{sec:Control} describes the design of the associated controller. Section \ref{sec:Simulations} presents simulation results for a 9-bus test case, where we show that without explicitly considering controlled invariance, conventional linear feedback and MPC controllers can lead to unsafe operations under bounded disturbances. 







\section{Preliminaries} \label{sec:Preliminaries}


We use the following notation throughout the paper. Assume $\A$ and $\B$ are convex, compact subsets of $\R^n$. The Minkowski sum of $\A$ and $\B$ is $\A \oplus \B = \{x + y \mid x \in \A , y \in \B\}.$ The Pontryagin difference of two sets $\A, \B \subseteq \R^n$ is $\A \ominus \B = \{x \in \R^n \mid x + y \in \A, \any y \in \B \}.$ Given $\{x_1,...,x_n\} \in \R^n$, projection onto a subset $\{x_i\}_{i\in \mathcal{I}}$ of the coordinates is given by $ Pr_\mathcal{I}(\A) = \{\{x_i\}_{i\in \mathcal{I}} \mid x \in \A\}.$ Set intersection is $ \A \cap \B = \{x \mid x \in \A, x \in \B\}. $ A linear transformation of a set is $F \A = \{Fx \mid x \in \A  \}.$ The evolution of a discrete time system with state $x$ and dynamics $f$ is written as $x^+ = f(x(t))$.

The following definitions introduce the concept of robust controlled-invariance which will be used throughout the paper. 

\theoremstyle{definition}
\begin{definition}[Invariant set \cite{Blanchini2015}] An invariant set for a system with dynamics $x^+ = f(x(t))$ is a set $\Q$ with the property 
    $x(t_0) \in \Q \implies x(t) \in \Q, \any t \geq t_0$.
\end{definition}

\theoremstyle{definition}
\begin{definition}[Robust controlled-invariant set \cite{Blanchini2015}] \label{def:RCI} Consider a system with dynamics $x^+ = f(x(t),u(t),d(t))$, where $x \in \R^n$ is the system state, $u \in \U \subseteq \R^m$ is a control input and $d \in \D \subseteq \R^p$ is an exogenous disturbance. The set $\Q$ is a \textit{robust controlled-invariant set} (RCI) for the system if there exists a feedback controller $u(x)$ such that if $x(t_0) \in \Q,$ then for all $t \geq t_0$ and all disturbance sequences $d(t) \in \D$, $x(t) \in \Q$.
\end{definition}


Now consider a set of safe operating conditions in a power system denoted by $\X \subseteq \R^n$. In light of Definition~\ref{def:RCI}, the purpose of computing an RCI is to ensure there exists a control law that guarantees safe operation for all time. Going further, the \emph{maximal RCI in} $\X$ permits safe operation under the widest possible range of operating conditions. The following definitions facilitate the computation of a maximal RCI. 

\theoremstyle{definition}
\begin{definition}[Preimage set \cite{Blanchini2015}] \label{def:preimage}
The \textit{preimage set} of a set $\X$ is $\r := \{x \in \R^n \mid \exists \ u \in \U : f(x(t),u(t),d(t) \in \X, \any d \in \D\}$.
\end{definition}

\theoremstyle{definition}
\begin{definition}[Admissible set \cite{Blanchini2015}] Given a safe set $\X$, the \textit{1-step admissible set} for $\X$ is given by $\X^{(1)} = \r \cap \X$, i.e. the set of states in $\X$ that can be kept inside $\X$ for one time step under any disturbance.
\end{definition}

We present an algorithm for computing maximal RCIs in Section \ref{sec:Sets_centralized} and extend it to the distributed case in Section \ref{sec:Sets_decentralized} using a consensus-based approach.
\section{Model} \label{sec:Model}

\subsection{System Dynamics}

Consider a power system with $N_B$ buses of which $N_L$ buses have loads and $N_G$ buses have synchronous generators. Each generator is assumed to be collocated with a distributed energy resource (e.g., solar or storage) that is capable of modulating a portion of its power output or consumption in order to provide frequency regulation. This assumption also applies to inverter-based systems that provide synthetic or virtual inertia. 

Let $\mathcal{G}$ be the set of buses with generators, and let $\mathcal{L}$ be the set of buses with loads. For $i \in \mathcal{G},$ generator $i$ is modeled as an emf $e_i \angle (\delta_i^0 + \delta_i)$ behind synchronous winding reactance $X_{\text{d}i}$, where $\delta_i^0$ is the steady-state rotor angle and $\delta_i$ is the deviation from $\delta_i^0$. The voltage at the generator terminals is $V_i \angle (\theta_i^0 + \theta_i)$, with $\theta_i^0$ and $\theta_i$ defined similarly. The variables $u_i$ for $i \in \mathcal{G}$ and $d_i$ for $i \in \mathcal{L}$ are the control input (power injection) and exogenous disturbance (net load fluctuation), respectively. From the linearized swing equation and DC power flow \cite{Kundur1994}, the dynamics at each bus are given by: \begin{subequations}\label{eqn:oct13_1}
    \begin{gather}
    0 = M_i \ddot{\delta}_i + D_i \dot{\delta}_i + K_i(\delta_i - \theta_i), i \in \mathcal{G} \label{eqn:oct13_2}\\
    \theta_i = \frac{u_{i\mid i \in \mathcal{G}} -d_{i\mid i \in \mathcal{L}} + \sum_{j \sim i} C_{ij}\theta_j + K_i \delta_{i\mid i \in \mathcal{G}}}{C_i + K_i} \label{eqn:nov8_1}
    \end{gather}
\end{subequations} where 
$M_i$ is the inertia,
$D_i$ is the damping coefficient, and
$K_i = \frac{e_iV_i}{X_{\text{d}i}}\cos(\delta_i^0 - \theta_i^0)$ for $i \in \mathcal{G}$ and $K_i = 0$ otherwise.
The constants $C_{ij}$ are given by $ V_iV_j[B_{ij}\cos(\theta_i^0-\theta_j^0) - G_{ij}\sin(\theta_i^0-\theta_j^0)]$, where $B_{ij}$ is the imaginary part of the $ij$-th entry of the bus admittance matrix $\mathbf{Y_{bus}}$ and $G_{ij}$ is the real part of the $ij$-th entry of $\mathbf{Y_{bus}}$, with $C_{ii} = 0.$ Lastly,
$C_i = \sum_{j \in \N_i} C_{ij},$ where $\N_i$ is the set of subsystems connected to bus $i$. This reflects the assumption that angle deviations are small but absolute angles may or may not be small.

The model in \eqref{eqn:oct13_1} uses a set of differential algebraic equations to describe the power system. In order to analyze the system in terms of set invariance, the algebraic equations are eliminated by finding a closed form solution to the linear DC power flow equations \eqref{eqn:nov8_1} and substituting for $\theta$ in the dynamic equations \eqref{eqn:oct13_2}. For the details of this procedure, see Appendix A1.

The system states are defined as follows. Recall $\delta_i$ and $\delta^0$ from \eqref{eqn:oct13_2}, and let $\omega_i = \dot{\delta}_i$ denote the deviation from nominal frequency $\omega^0$ at bus $i$. Let $x_i = \m{\delta_i & \omega_i}^T$, and let $x = \m{x_1^T & \cdots & x_{N_G}^T}^T \in \R^{2N_G}.$ After substituting $\theta$ in \eqref{eqn:oct13_2}, the system dynamics in \eqref{eqn:oct13_1} can be rewritten using $x$ as $\dot{x} = Ax + Bu + Ed$ where $u = \m{u_{i_1}&\cdots&u_{i_{N_G}}}^T$ is a vector of control inputs (power injections) at the generator buses and $d =\m{d_{i_1}&\cdots&d_{i_{N_L}}}^T$ is a vector of exogenous disturbances (net load fluctuations) from the load buses. Details of the matrices $A$, $B$, and $E$ are given in Appendix A2.

It is important to distinguish between local dynamics and the effects of coupling because from the perspective of individual buses, all non-local phenomena can be conservatively treated as unknown-but-bounded disturbances, with the bounds coming from the safety sets of neighboring buses. For each of the $N_G$ subsystems, the dynamics are
\begin{align}
    \dot{x_i} = \underbrace{A_{1i}x_i + B_{1i}u_i}_\text{local dynamics} + \underbrace{A_{2i}y_i + B_{2i}u_{\N_i} + E_id}_\text{non-local effects}, \label{eqn:nov10_1}
\end{align} where $y_i$ is the states of the neighbors of bus $i$, and $u_{\N_i}$ is the control inputs from nodes neighboring $i$. The matrices $A_{1i}$ and $A_{2i}$ together constitute the nonzero blocks of the $i$th block-row of the matrix $A$, with $A_{1i}$ being the $i$th diagonal block of $A$. The matrix $B$ is similarly decomposed. The last three terms in \eqref{eqn:nov10_1} are referred to as the state coupling, input coupling, and exogenous disturbances, respectively.

The continuous dynamics from \eqref{eqn:nov10_1} are discretized via a first-order approximation with time step $h$, leading to the frequency dynamics at each generator bus being given by \begin{align} x_i^+=& \underbrace{\hat{A}_{1i}x_i(t) +\hat{B}_{1i}u_i(t)}_\text{local dynamics}
    + \underbrace{\hat{A}_{2i}y_i(t) + \hat{B}_{2i}u_{\N_i}(t) + \hat{E}_id(t)}_\text{non-local effects} \label{eqn:nov5_1}.
\end{align}

\subsection{System Constraints}
Let $\X_i\subseteq \R^2$ be the set of safe operating points $(\delta_i,\omega_i)$ for subsystem $i$, $\U_i \subseteq \R$ be the set of control actions available to $i$, and $\D \subseteq \R^{N_L}$ be the disturbance set for the system. These sets are given by
\begin{subequations} \label{eqn:nov5_2}
    \begin{gather}
    \X_i = \{(\delta_i,\omega_i) : |\delta_i| \leq \bar{\delta}_i, |\omega_i| \leq \bar{\omega}_i\}, \
    \U_i = \{u_i : |u_i| \leq \bar{u}_i\}\\
    \D = \{d : |d_i| \leq \bar{d}_i, i=1,...,N_L\}
    \end{gather}
\end{subequations} where $\bar{\delta}_i, \bar{\omega}_i, \bar{u}_i,$ and $\bar{d}_i$ are positive constants. Further, let $\X = \prod_{i=1}^{N_G}\X_i$ and $\U = \prod_{i=1}^{N_G}\U_i$.

Because of the presence of disturbances, not all points in $\X$ are invariant. That is, even though a point is in $\X$, there may exist a disturbance that can drive the states to be outside of $\X$. To ensure safe operation, we therefore need to find a \emph{robust controlled-invariant set (RCI)} inside of $\X$. The object of this paper, the \emph{maximal RCI}, ensures safe operation over the widest possible range of operating conditions.

We describe a system in terms of its dynamics and constraints using the notation $\Sigma = (\hat{A},\hat{B},\hat{E},\X,\U,\D)$ where $\hat{A},\hat{B},$ and $\hat{E}$ give the discrete-time dynamics and $\X,\U,$ and $\D$ are from \eqref{eqn:nov5_2}. Each subsystem of $\Sigma$, with discrete-time dynamics from \eqref{eqn:nov5_1}, is denoted $\Sigma_i = (\hat{A}_{1i},\hat{B}_{1i},\hat{A}_{2i},\hat{B}_{2i},\hat{E}_i,\X_i,\U_i,\D,\N_i)$.
\section{Computing robust controlled-invariant sets} \label{sec:Sets}

\subsection{Centralized computation} \label{sec:Sets_centralized}

Consider a system $\Sigma = (\hat{A},\hat{B},\hat{E},\X,\U,\D)$ and assume the matrix pair $(\hat{A},\hat{B})$ is controllable.
We seek to approximate the maximal RCI that fits inside the safe set $\X$. One method \cite{Blanchini2015} involves iterative backwards-reachability computations for the safe set. An approximation of the largest RCI, denoted $\S$, can be found in a finite number of iterations if the true RCI is nonempty \cite[Prop. 5.4]{Blanchini2015}. The procedure is summarized in Algorithm \ref{alg:backprop} and explained in detail in Appendix B.
Each step in the procedure is an operation on polytopes which can be performed using open-source software packages \cite{Heirung2019}.

\begin{algorithm}
\caption{Centralized computation of largest RCI \cite{Blanchini2015}}
\begin{algorithmic}[1]

\Require $\Sigma, \varepsilon$
\Ensure $\S$ \Comment{Maximal approximate RCI}

\State $\X^{(k = 0)} = \X$

\For{$k = 0: k_\text{max}$}
\State $\P^{(k)} = \{x: x + Ed \in \X^{(k)}, \any d \in \D\} = \X^{(k)} \ominus E\D$
\State $\M^{(k)} = \{(x,u) \mid \hat{A}x + \hat{B}u \in \P^{(k)}; u \in \U\}$
\State $\r^{(k)} = \{x \mid \exists u \in \U: (x,u) \in \M^{(k)}\}$

\State $\X^{(k+1)}$ = $\r^{(k)} \cap \X = \r^{(k)} \cap \X^{(k)}$

\If{$(1+\varepsilon)\X^{(k+1)} \supseteq \X^{(k)}$}
\State Break \Comment{Success}
\ElsIf{$\X^{(k)} = \emptyset$}
\State Break \Comment{Failure}
\EndIf

\EndFor
\If{$k < k_\text{max}$}
\State $\S = \X^{(k+1)}$
\Else 
\State Inconclusive result
\EndIf

\end{algorithmic}
\label{alg:backprop}
\end{algorithm}

\subsection{Decentralized computation} \label{sec:Sets_decentralized}
The scalability of Algorithm \ref{alg:backprop} to higher dimensions is poor even with linear dynamics and polytopic constraint sets. Rather than directly computing the maximal RCI, we decompose the system into a network of subsystems and approximate the maximal RCI per subsystem, treating the effects of state and input coupling as additional bounded disturbances. The composition of each subsystem's RCI is a conservative (i.e., inner) approximation of the system-wide maximal RCI.

Consider a collection of interconnected subsystems $\{\Sigma_i\}_{i=1}^{N},$ each described by a tuple $ \Sigma_i = (\hat{A}_{1i},\hat{B}_{1i},\hat{A}_{2i},\hat{B}_{2i},\hat{E}_i,\X_i,\U_i,\D, \N_i)$ and assume each matrix pair $(\hat{A}_{1i},\hat{B}_{1i})$ is controllable. The algorithm for computing subsystem RCIs is a modification of Algorithm \ref{alg:backprop} applied to each subsystem $\Sigma_i$. In Algorithm \ref{alg:backprop}, lines 3-6 compute 1-step admissible sets backwards in time. Going forward in time, the coupling disturbances experienced by bus $i$ during the transition from $t = -k-1$ to $t = -k$ consist of the $k+1$-steps admissible sets of the buses connected to $i$. Therefore, the coupling disturbance sets $\{\X_j^{(k+1)}\}_{j \in \N_i}$ need to be known in order to compute the preimage of $\X_i^{(k)}$. This naturally leads to a fixed point algorithm for simultaneously computing $k+1$-steps admissible sets for all subsystems. The algorithm can be interpreted as a consensus routine where each bus iteratively compares information with its neighbors to reach an agreement on the state coupling disturbance sets at time $t = -k-1$ \cite{Mesbahi2010}.  In the controls literature, the consensus routine is often referred to as assume-guarantee reasoning \cite{Ghasemi2020,Chen2019b,Eqtami2019}. We summarize the consensus-based distributed computation of state coupling disturbance sets in Algorithm \ref{alg:consensus} and provide a convergence guarantee for sufficiently small discretization time step $h$ in Theorem 1. 

\begin{algorithm}
\caption{Consensus-based computation of state coupling disturbance sets}
\begin{algorithmic}[1]

\Require $\{\Sigma_i\}_{i=1}^{N}, k, \varepsilon$

\Ensure $\{\Y_i^{(k)}\}_{i=1}^N$
\State $\Y_i^{(k,l = 0)} = \prod_{j \in \N_i} \X_j^{(k)}$ for all $ i = 1,...,N$
\State $\U_{\N_i} = \prod_{j \in \N_i} \U_j$ for all $ i = 1,...,N$

\For{$l = 0:l_\text{max}$}
\For{$i = 1:N$}
\State $\P_i^{(k,l)} = \X_i^{(k)} \ominus \hat{A}_{2i} \Y_i^{(k,l)} \ominus \hat{E}_i \D \ominus \hat{B}_{2i}\U_{\N_i}$
\State $\M_i^{(k,l)} = \{(x_i,u_i) \mid u_i \in \U_i, \hat{A}_{1i}x_i + \hat{B}_{1i}u_i \in \P_i^{(k,l)}\}$
\State $\r_i^{(k,l)}$ = $\{x_i \mid \exists u_i: (x_i,u_i) \in \M_i^{(k,l)}\}$
\State $\X_i^{(k+1,l)} = \r_i^{(k,l)} \cap \X_i^{(k)}$
\State $\Y_i^{(k,l+1)} = \prod_{j \in \N_i} \X_j^{(k+1,l)}$

\EndFor
\If{$(1-\varepsilon)\Y_i^{(k,l)} \subseteq \Y_i^{(k,l+1)} \subseteq (1+\varepsilon) \Y_i^{(k,l)},\any i$}
\State Break \Comment{Converged}
\EndIf

\EndFor
\State $\Y_i^{(k)} = \Y_i^{(k,l+1)}$ for all $ i = 1,...,N$

\end{algorithmic}
\label{alg:consensus}
\end{algorithm}

\ifx
\begin{algorithm}
\caption{Consensus-based computation of coupling disturbance sets}
\begin{algorithmic}[h]

\Require $k,\varepsilon, \{A_{1i},A_{2i},B_{1i},B_{2i},E_i,\X_i^{(k)},\U_i\}_{i=1}^N, \D$
\Ensure $\Y^{(k)}$, flag\textunderscore $l$

\State $\Y^{(k,l = 0)} = \prod_{j = 1}^N \X_j^{(k)}$
\State $\U = \prod_{i=1}^{N} \U_i$
\State $l = 0$
\State flag\textunderscore $l$ = 0

\While{flag\textunderscore $l$ = 0}
\For{$i = 1:N$}
\State $\P_i^{(k,l)} = \X_i^{(k)} - A_{2i} \Y^{(k,l)}  - B_{2i}\U - E_i \D$
\State $\M_i^{(k,l)} = \{(x_i,u_i) \mid A_{1i}x_i + B_{1i}u_i \in \P_i^{(k,l)}; u_i \in \U_i\}$
\State $\r_i^{(k,l)}$ = $\{x_i \mid \exists u_i: (x_i,u_i) \in \M_i^{(k,l)}\}$
\State $\X_i^{(k+1,l)} = \r_i^{(k,l)} \cap \X_i^{(k)}$

\EndFor

\State $\Y^{(k,l+1)} = \prod_{j = 1}^N \X_j^{(k+1,l)}$

\If{$(1-\varepsilon)\Y^{(k,l)} \subseteq \Y^{(k,l+1)} \subseteq (1+\varepsilon) \Y^{(k,l)}$}
\State $\Y^{(k)} = \Y^{(k,l+1)}$
\State flag\textunderscore $l$ = 1 \Comment{Stop successfully}
\ElsIf{$\Y^{(k,l+1)} = \emptyset$}
\State $\Y^{(k)} = \emptyset$
\State flag\textunderscore $l$ = 2 \Comment{Stop unsuccessfully}
\Else
\State $l = l + 1$
\EndIf

\EndWhile

\end{algorithmic}
\label{alg:consensus}
\end{algorithm}

\fi

\ifx
\If{$(1-\varepsilon)\Y_i^{(k,l)} \subseteq \Y_i^{(k,l+1)} \subseteq (1+\varepsilon) \Y_i^{(k,l)}$}
\State $\Y_i^{(k)} = \Y_i^{(k,l+1)}$
\State flag\textunderscore $l$ = 1 \Comment{Stop successfully}
\ElsIf{$\Y_i^{(k,l+1)} = \emptyset$}
\State $\Y_i^{(k)} = \emptyset$
\State flag\textunderscore $l$ = 2 \Comment{Stop unsuccessfully}
\Else
\EndIf
\fi

\begin{theorem} \label{thm:consensus_convergence}
Assume the dynamics of the system are as in \eqref{eqn:nov5_1} and that sets $\X_i$, $\U_i$, and $\D_i$ are as in \eqref{eqn:nov5_2}. Assume that the discretization time step $h$ is chosen according to
\begin{align}
    h \leq c \cdot \min_{i,j} \frac{1}{\|A_{2i}\|_2 \sqrt{2|\N_j|}}
\end{align} where $\|A_{2i}\|_2$ is the induced matrix 2-norm of the continuous-time coupling matrix $A_{2i}$, $|\N_j|$ is the degree of subsystem $j$, and $c$ is some positive constant. Then, the sets $\Y_i^{(k,l)}, i = 1,...,N_G$ in Algorithm \ref{alg:consensus} converge as $l \rightarrow \infty$.
\end{theorem}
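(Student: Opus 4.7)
The plan is to recast one pass of the outer ($l$-)loop of Algorithm \ref{alg:consensus} as a single self-map $G$ on a complete metric space of set-tuples and to show that, under the stated step-size bound, $G$ is a strict contraction in Hausdorff distance; Banach's fixed-point theorem then delivers convergence, automatically at a geometric (linear) rate. Concretely, collect the iterates into $\mathbf{Y} = (\Y_1,\ldots,\Y_{N_G})$ with each $\Y_i$ a convex compact subset of $\prod_{j\in\N_i}\X_j$, and endow tuples with the product metric $d(\mathbf{Y},\mathbf{Y}') = \max_i d_H(\Y_i,\Y_i')$. All iterates live inside the fixed compact set $\prod_i \X_i$, so we are working in a complete metric space.

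I would bound the Lipschitz constant of $G$ by tracking $d_H$ through lines 5--9 using standard set-valued estimates: (i) Pontryagin difference with a fixed summand is non-expansive, $d_H(\A\ominus \mathcal{C}, \B\ominus \mathcal{C}) \le d_H(\A,\B)$; (ii) a linear image satisfies $d_H(M\A,M\B)\le \|M\|_2\, d_H(\A,\B)$; (iii) the preimage in line 7 can be written $\r_i = \hat{A}_{1i}^{-1}(\P_i \ominus \hat{B}_{1i}\U_i)$ and is $\|\hat{A}_{1i}^{-1}\|_2$-Lipschitz in $\P_i$, where $\hat{A}_{1i} = I + hA_{1i} + O(h^2)$ is invertible for small $h$; (iv) intersection with the fixed set $\X_i^{(k)}$ in line 8 is non-expansive under a standard regularity assumption on the intersection; and (v) the product step in line 9 obeys $d_H(\prod_i \A_i,\prod_i \B_i) \le \sqrt{\sum_i d_H(\A_i,\B_i)^2}$.

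The only iterate-dependent term in lines 5--9 is $\hat{A}_{2i}\Y_i^{(k,l)}$, so chaining (i)--(iv) yields $d_H(\X_i^{(k+1,l+1)},\X_i^{(k+1,l)}) \le \|\hat{A}_{1i}^{-1}\|_2\|\hat{A}_{2i}\|_2\, d_H(\Y_i^{(k,l+1)},\Y_i^{(k,l)})$. For the first-order Euler discretization, $\|\hat{A}_{1i}^{-1}\|_2 = 1+O(h)$ and $\|\hat{A}_{2i}\|_2 = h\|A_{2i}\|_2 + O(h^2)$. Passing through the product over $\N_j$ via (v) and taking $\max_j$ gives
\[
d(G(\mathbf{Y}^{(k,l)}), G(\mathbf{Y}^{(k,l-1)})) \le h\,(1+O(h))\,\sqrt{\max_j |\N_j|}\,\max_i\|A_{2i}\|_2 \cdot d(\mathbf{Y}^{(k,l)}, \mathbf{Y}^{(k,l-1)}).
\]
The hypothesis on $h$, with $c$ small enough to absorb the $O(h)$ term and the factor $\sqrt{2}$, drives this coefficient strictly below $1$, so $G$ is a contraction and the iterates converge at a linear rate.

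The main obstacle is step (iv): Hausdorff distance of intersections is not non-expansive in general and can blow up if an iterate becomes tangent to $\partial \X_i^{(k)}$. I would attack this by using controllability of $(\hat{A}_{1i},\hat{B}_{1i})$ and the nonempty interior of each $\X_i^{(k)}$ to show that the iterates stay in a neighborhood where a uniform Slater point persists, rendering the relevant intersection regular along the entire orbit. The remaining work — tightening my composite constant into exactly the factor $\sqrt{2|\N_j|}$ appearing in the statement — is routine bookkeeping on how the discretization error and the cross-subsystem norms aggregate, and should fall out once this regularity is secured.
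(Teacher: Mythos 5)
Your proposal is essentially correct and arrives at the same contraction factor $h\,(1+O(h))\,\|A_{2i}\|_2\sqrt{2\nu}$ as the paper, but by a genuinely different route. The paper does not set up a metric-space contraction: it works vertex by vertex in halfspace representation, subtracts the two active-constraint equations for corresponding vertices of $\r_i^{(k,l)}$ and $\r_i^{(k,l+1)}$ to get $\Delta x_i = -\hat{A}_{1i}^{-1}\hat{A}_{2i}\Delta y_i^*$, bounds only the $\delta$-coordinate displacement (invoking the observation that buses couple only through angles, Lemma~\ref{rem:nov7_1}), and then sums the resulting perturbations as a geometric series. Your Banach fixed-point formulation in the Hausdorff product metric is cleaner in that it treats the whole set-tuple at once, does not need the angles-only reduction, and makes the linear convergence rate an automatic corollary rather than a by-product; the paper's vertex argument buys a more concrete, coordinate-level picture of where the factor $\sqrt{2\nu}$ comes from (it is the norm of a $2\nu$-dimensional vector of per-coordinate perturbations of size $\eta$), at the cost of implicitly assuming that the active constraint pattern at each vertex is preserved across iterations.

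Two specific remarks. First, the obstacle you flag in step (iv) is real, and it is instructive that the paper does not escape it either: Lemma~\ref{rem:nov7_2} asserts without proof that ``intersection with $\X_i^{(k)}$ cannot increase the distance between iterates,'' which is false for general convex bodies in Hausdorff distance. Your proposed remedy (a uniform Slater point along the orbit, using controllability and nonempty interior) is a reasonable way to close this, but as written it remains a sketch; until it is carried out, your proof has the same unpatched hole as the paper's, just honestly labeled. Second, your formula in step (iii) should read $\r_i = \hat{A}_{1i}^{-1}\bigl(\P_i \oplus (-\hat{B}_{1i}\U_i)\bigr)$: the existential quantifier over $u_i$ produces a Minkowski sum, not a Pontryagin difference. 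This does not damage your Lipschitz estimate, since Minkowski addition of a fixed summand is an isometry in Hausdorff distance for convex compacta, but the operation as you wrote it would compute the wrong (smaller) set.
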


\begin{proof}
The full proof is provided in Appendix \ref{app:thm} and sketched here. First, consider the response of a single subsystem $\Sigma_i$ to a change in the set of possible state coupling disturbances at time step $t = -k$, denoted $\Y_i^{(k)}$. We bound the resulting change in the size of $\X_i^{(k+1)}$ by a term that depends on the size of the discretization time step, the strength of the coupling between bus $i$ and its neighbors, and the size of the initial change. This result is extended to the case when all subsystems simultaneously experience a change to their individual sets of state coupling disturbances. If the time step $h$ is chosen appropriately small, then since $\Y_i^{(k,l+1)} = \prod_{j \in \N_i} \X_j^{(k+1,l)},$ the initial change in $\{\Y_i^{(k)}\}_{i=1}^{N_G}$ gives rise to a smaller subsequent change in $\{\Y_i^{(k)}\}_{i=1}^{N_G}$ through its effect on $\{X_j^{(k+1)}\}_{j=1}^{N_G}$. This process converges.  
\end{proof}

Distributed computation of maximal RCIs is a composition of Algorithms \ref{alg:backprop} and \ref{alg:consensus}. For each time step $k$, we run Algorithm \ref{alg:consensus} to find the true coupling disturbance sets. With this knowledge, an iteration of Algorithm \ref{alg:backprop} is performed. The procedure is summarized in Algorithm \ref{alg:dist_backprop}.

\begin{algorithm}
\caption{Distributed computation of subsystem RCIs}
\begin{algorithmic}[1]

\Require $\{\Sigma_i\}_{i=1}^{N}, \varepsilon$

\Ensure $\{\S_i\}_{i=1}^N$

\State $\X_i^{(k = 0)} = \X_i$ for all $i = 1,...,N$
\For{$k = 1:k_\text{max}$}
\State Compute $\Y_i^{(k)}$ for all $ i = 1,...,N$ \Comment{Alg. \ref{alg:consensus}}
\For{$i = 1:N$}
\State $\P_i^{(k)} = \X_i^{(k)} \ominus \hat{A}_{2i} \Y_i^{(k)} \ominus \hat{E}_i \D \ominus \hat{B}_{2i}\U_{\N_i}$
\State $\M_i^{(k)} = \{(x_i,u_i) \mid u_i \in \U_i, \hat{A}_{1i}x_i + \hat{B}_{1i}u_i \in \P_i^{(k)}\}$
\State $\r_i^{(k)}$ = $\{x_i \mid \exists u_i \in \U_i: (x_i,u_i) \in \M_i^{(k)}\}$
\State $\X_i^{(k+1)} = \r_i^{(k)} \cap \X_i^{(k)}$
\EndFor

\If{$(1+\varepsilon)\X_i^{(k+1)} \supseteq \X_i^{(k)},\any i$}
\State Break \Comment{Success}
\ElsIf{$\exists i: \X_i^{(k)} = \emptyset$}
\State Break \Comment{Failure}
\EndIf
\EndFor
\If{$k < k_\text{max}$}
\State $\S_i = \X_i^{(k+1)}$ for all $i = 1,...,N$
\Else 
\State Inconclusive result
\EndIf

\end{algorithmic}
\label{alg:dist_backprop}
\end{algorithm}


\section{Controller design} \label{sec:Control}

Once an RCI is generated, there exist many different control strategies that can render the set invariant \cite{Blanchini2015,Ames2019,Langson2004}. However, even simple myopic strategies demonstrate the efficacy of control based on RCIs. To demonstrate the concepts in this paper, a model predictive controller is assigned to each subsystem, with allowable control actions chosen from a \textit{regulation map} \cite{Blanchini2015} defined as  \begin{align}
    \Omega_i(x_i) = \{u_i \in \U_i \mid x_i^+ \in \S_i, \forall\ y_i \in \T_i, u_{\N_i} \in \U_{\N_i}, d \in \D\}
\end{align} where $\T_i = \prod_{j \in \N_i} \S_j$ is the state coupling disturbance set for bus $i$ and $\U_{\N_i} = \prod_{j \in \N_i} \U_j$ is the input coupling disturbance set for bus $i$. Since all relevant sets are polytopes, the set $\Omega_i(x_i)$ is a polytope which can be computed using set operations. By the definition of robust controlled-invariance and the fact that $\S_i \supseteq \Xinf_i$, the regulation map $\Omega_i(x_i)$ is guaranteed to be nonempty for all $x_i \in \Xinf_i$. This statement can be strengthened to hold for all $x_i \in \S_i$ if $\S_i$, the approximation of $\Xinf_i$, is invariant itself \cite{Rakovic2008}.

The local feedback control law is given by:
\begin{subequations}
    \begin{align}
    u_i(x_i) = &\argmin_{u_i \in \U_i} c(\hat{x}_i^+,u_i) \label{eqn:nov8_4}\\
    \text{ s.t. } &\hat{x}_i^+ = \hat{A}_{1i}x_i + \hat{B}_{1i}u_i, \label{eqn:nov8_5}\\ 
    &u_i \in \Omega_i(x_i). \label{eqn:nov8_6}
    \end{align}
\end{subequations} The cost $c$ in \eqref{eqn:nov8_4} is convex in $(\hat{x}_i^+,u_i)$, leading to a convex program. This ensures efficient computation and reflects realistic costs based on, for example, the 1-, 2-, or $\infty$-norm. The proposed controller will be referred to as the robust model predictive controller (RMPC). The invariance properties of the RCI $\S_i$ allow RMPC to outperform conventional controllers in the presence of adversarial attacks. We demonstrate this result on a 9-bus example in Section \ref{sec:Simulations}.

\section{Simulations} \label{sec:Simulations}

The RMPC is weighed against the following controllers.
\begin{enumerate}
    \item \textbf{One-step look ahead MPC with local feedback.} This controller only considers the safety set $\X_i$. The feedback control law is given by replacing \eqref{eqn:nov8_6} with $\hat{x}_i^+ \in \X_i$.
    \item \textbf{Infinite-horizon LQR with local feedback}. This controller is given by $u_i(x_i) = - \textbf{sat}(K_{\text{LQR}i}x_i),$ where $\textbf{sat}(\cdot)$ thresholds the output so that $u_i \in \U_i.$ The feedback gains $K_{\text{LQR}i}$ are computed for each bus individually via the solution to the algebraic Riccati equation.
\end{enumerate}




To demonstrate the superior robustness of RMPC, we compare its performance under worst-case disturbances to both the one-step look ahead MPC and the standard LQR. By worst case disturbance, we mean a system-wide exogenous disturbance whose goal is to drive subsystem $\Sigma_{\hat{i}}$ out of its safe set. We assume the disturbance knows the states $x_{\hat{i}}$ and control inputs $u_{\hat{i}}$ of subsystem $\Sigma_{\hat{i}}$, and chooses actions $d(t)$ from the set $\D$.


Results are generated for a 9-bus test case with three generators and three loads. Figure \ref{fig:sets} displays the safe and invariant regions of the state space per bus and the k-steps admissible sets that converge to the invariant sets. The safe set is defined as $|\delta| \leq 10^\circ, |\omega| \leq 0.6$ Hz. Algorithm \ref{alg:consensus} converges very fast empirically, generally taking about 3-4 iterations. 

\begin{figure}[ht]
    \centering
    \includegraphics[width = 9cm,height=3cm]{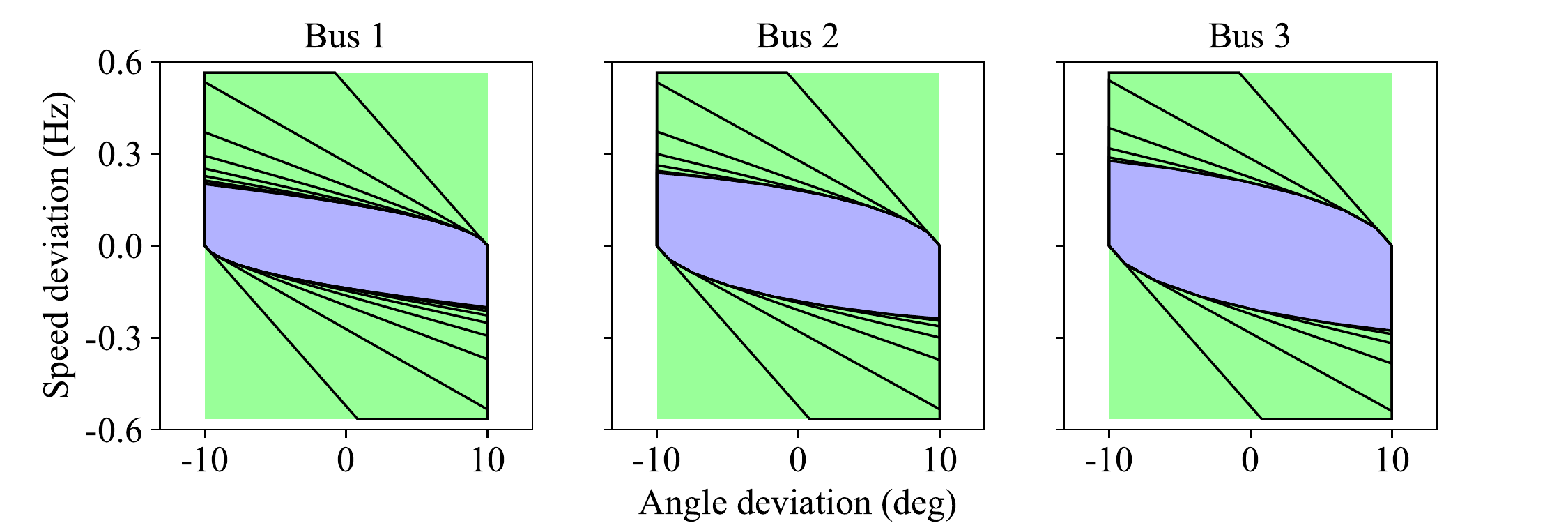}
    \caption{Safe sets (green), RCI sets (blue), and boundaries of iterated admissible sets (black). Each RCI is described in halfspace representation by 20 inequalities.}
    \label{fig:sets}
\end{figure}



Figure \ref{fig:sims} displays simulation results for bus 1. To test robustness to initial conditions, trajectories are initialized at various points distributed uniformly around the boundaries of the invariant sets. We simulate the 9 bus system for two seconds at 50-millisecond time increments, under adversarial disturbances targeting bus 1. The RMPC controller maintains robust invariance while the baseline MPC and LQR controllers are forced out of the safe sets every time. This result is further illustrated in the time domain in Figure \ref{fig:time_domain}.

\begin{figure}[ht]
    \centering
    \includegraphics[width = 9cm, height=3cm]{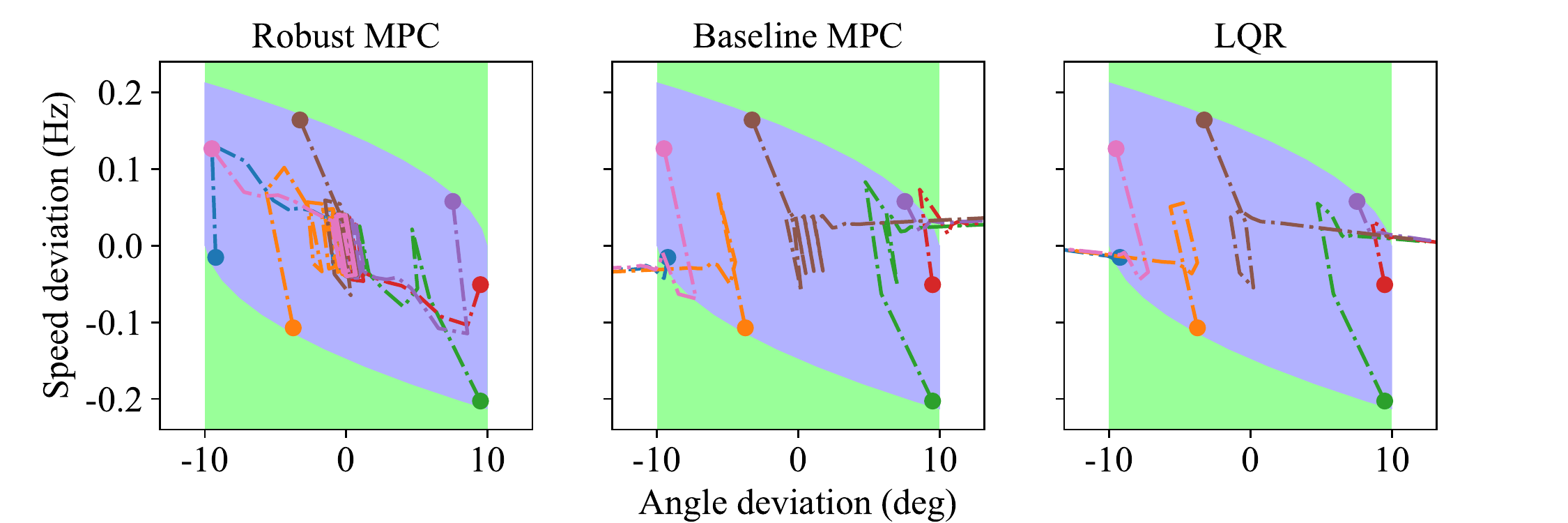}
    \caption{State trajectories at bus 1 subjected to adversarial disturbances. The different colored trajectories correspond to different initial conditions.}
    \label{fig:sims}
\end{figure}

\begin{figure}[ht]
    \centering
    \includegraphics[width = 9cm,height=3cm]{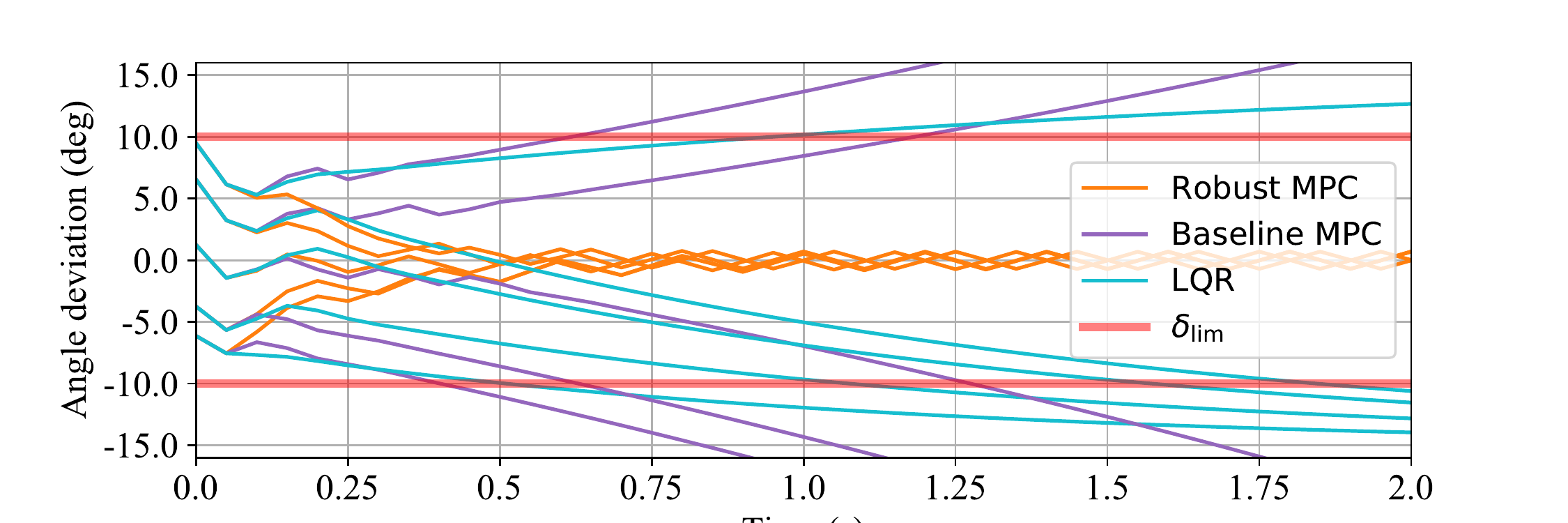}
    \caption{Time-domain trajectories of $\delta_1$ under adversarial disturbances, showing that RMPC maintains safety while the other two controllers do not.}
    \vspace{-0.5cm}
    \label{fig:time_domain}
\end{figure}
\section{Conclusion} 
As renewable energy proliferates, set-theoretic control is becoming increasingly relevant in power system operations. We presented a scalable, distributed procedure for generating local control laws that have global safety guarantees.  The results show that the proposed controller is robust to adversarial disturbances where conventional controllers fail easily. Future work includes tightening worst-case assumptions and modelling correlations within each disturbance class.
\bibliographystyle{ieeetr}
\bibliography{references,ref2}
\appendix

\subsection{Power system model} \label{app:model}

\subsubsection{Elimination of algebraic buses} \label{app:model_reduction}

In order to eliminate the algebraic equations from the system dynamics, we first define the following matrices:

\begin{itemize}
    \item Let $I_G \in \R^{N_B \times N_G}$ with $[I_G]_{ij} = 1$ if the $j$th generator is located at bus $i$, and $[I_G]_{ij} = 0$ otherwise. 
    \item Let $I_L \in \R^{N_B \times N_L}$ with $[I_G]_{ij} = 1$ if the $j$th load is located at bus $i$, and $[I_L]_{ij} = 0$ otherwise. 
    \item Let $H \in \R^{N_B \times N_B}$ be a diagonal matrix with $H_{ii} = \frac{1}{C_i + K_i}$ if $i \in \mathcal{G}$ and $H_{ii} = \frac{1}{C_i}$ otherwise.
    \item Let $C \in \R^{N_B \times N_B}$ with $[C]_{ij} = C_{ij}$.
    \item Let $K \in \R^{N_G \times N_G}$ be a diagonal matrix with $[K]_{ii} = K_i$.
\end{itemize}

With the above matrix definitions, the power balance equations at the algebraic buses can be rewritten as \begin{gather}
    H^{-1} \theta = I_G u - I_L d + C \theta + I_G K \delta,
\end{gather}
where $\theta \in \R^{N_B}, u \in \R^{N_G}, d \in \R^{N_L},$ and $\delta \in \R^{N_G}$ are vectorized quantities. Solving for $\theta$ we obtain \begin{gather}
    \theta = (H^{-1}-C)^{-1}(I_G u - I_L d + I_G K \delta) \label{eq:oct1_1}
\end{gather}
Since $H$ is diagonal and $C$ is an adjacency matrix with no diagonal entries, it is reasonable to assume that $(H^{-1}-C)$ is invertible.

\subsubsection{System dynamics} \label{app:model_dynamics}

Let $R = I_{N_G \times N_G} \otimes \m{1&0}$ such that $\delta = Rx \in \R^{N_G}.$ The dynamics of each bus are described by \begin{subequations}
    \begin{equation}
    \dot{\delta}_i = \omega_i
    \end{equation}
    \begin{equation}
    \dot{\omega}_i = -\frac{K_i}{M_i} \delta_i - \frac{D_i}{M_i}\omega_i + \frac{K_i}{M_i}\theta_i.
    \end{equation}
\end{subequations} In matrix form, this yields \begin{gather}
    \dot{x}_i = A_i x_i + T_i \theta_i,
\end{gather} where \begin{gather}
    A_i = \m{0 & 1 \\ -\frac{K_i}{M_i} & -\frac{D_i}{M_i}}, \text{and} \\ 
    T_i = \m{0 \\ \frac{K_i}{M_i}}.
\end{gather} Therefore, the system dynamics can be written as \begin{gather}
    \dot{x} = Ax + T \theta, \text{ where} \label{oct1_2}\\ 
    A = \textbf{diag}(\{A_i\}_{i = 1}^{N_G}), \text{and} \nonumber \\
    T = (KM^{-1} \otimes I_{2 \times 2})(I_G^T \otimes \m{0 \\ 1}) = KM^{-1}I_G^T \otimes \m{0 \\ 1} \nonumber
\end{gather} where the last equality is due to the mixed product property of Kronecker products. The formula for $T$ comes from the definition of $T_i$ and the structure of $I_G^T$, which maps the set of all bus angles to the subset of bus angles corresponding to the generator terminals.

Now, combining \eqref{eq:oct1_1} and \eqref{oct1_2}, we can eliminate all algebraic buses and describe the system dynamics using \begin{gather}
    \dot{x} = \tilde{A}x + Bu + Ed, \text{ where} \label{oct2_1}\\ 
    \tilde{A} = A + T(H^{-1}-C)^{-1}I_GKR, \nonumber \\
    B = T(H^{-1}-C)^{-1}I_G, \text{ and} \nonumber \\
    E = T(H^{-1}-C)^{-1}I_L. \nonumber
\end{gather}

\subsection{Details on Algorithm \ref{alg:backprop}} \label{app:alg1}
Below, we provide some details and intuition behind the algorithm for generating maximal RCIs from \cite{Blanchini2015}. The procedure initializes with $\Sigma$. When $k = 0,$ $\X^{(0)}$ is equal to the safety/target set $\X \subset \R^n.$ To compute the 1-step admissible set of $\X^{(0)},$ we set $k = 0$ and perform the following operations. All operations can be carried out using e.g. \cite{Heirung2019}.

\begin{enumerate}
    \item \textbf{Set erosion:} Let $\P^{(k)} = \X^{(k)} \ominus \hat{E} \D = \{x \mid x + Ed \in \X^{(k)}, \any d \in \D \}.$ The set $\P^{(k)}$ represents the target set ``narrowed'' by the uncertainty in the exogenous disturbances.
    \item \textbf{Backpropagation:} Let $\M^{(k)} = \{(x,u): \hat{A}x + \hat{B}u \in \P^{(k)}, u \in \U\}$. If a state-control pair $(x,u)$ is in $\M^{(k)}$, then $\hat{A}x + \hat{B}u$ is in $\P^{(k)}$ which means $x^+ = \hat{A}x + \hat{B}u + \hat{E}d$ is in $X^{(k)}$ for any $d \in \D.$
    \item \textbf{Projection:} Let $\r^{(k)}$ = $\{x: \text{ there exists } u: (x,u) \in \M^{(k)}\}$. Then by Definition \ref{def:preimage}, $\r^{(k)}$ is the preimage set of $\X^{(k)}$. $\r^{(k)}$ consists of all points in $\R^n$ that can be driven into $\X^{(k)}$ in one time step, without prior knowledge of the disturbance.
    \item \textbf{Intersection:} Let $\X^{(k+1)}$ = $\r^{(k)} \cap \X$ be the $k+1$-steps admissible set, which consists of all points in $\X$ that can be driven into $\X^{(k)}$ in one time step, without prior knowledge of the disturbance.
\end{enumerate}

By induction, it is easy to see that the points in $\X^{(k+1)}$ can be made to stay in $\X$ for at least $k+1$ time steps under any disturbance sequence. Taking $k \rightarrow \infty$, it becomes obvious that $\X^{(\infty)}$ is an RCI. Moreover, $\X^{(\infty)}$ is the largest RCI contained in $\X$. One way to see this is that in the backpropagation step, the set $\M^{(k)}$ is as inclusive of state-action pairs as possible.

\subsection{Proof of Theorem \ref{thm:consensus_convergence}} \label{app:thm}

The two following lemmas are useful in the analysis of the convergence of Algorithm \ref{alg:consensus}.

\begin{lemma} \label{rem:nov7_1}
If the $\delta$ limits of the coupling disturbance sets $\Y_i^{(k,l)}$ converge, then the sets $\Y_i^{(k,l)}$ themselves converge. \end{lemma}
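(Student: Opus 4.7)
My plan hinges on a structural feature of the swing-equation model. From the derivation in Appendix A1, the continuous-time coupling in $\tilde{A} = A + T(H^{-1}-C)^{-1}I_G K R$ involves $T = KM^{-1}I_G^T \otimes [0\;1]^T$ and $R = I_{N_G}\otimes [1\;0]$. Consequently the off-diagonal coupling block $A_{2i}$ has non-zero entries only in its second row (the $\dot{\omega}_i$ equation), and within that row only in the columns corresponding to the $\delta_j$'s of neighboring buses, never their $\omega_j$'s. The first-order discretization $\hat{A}_{2i} = h A_{2i}$ inherits the same sparsity pattern. In particular, the set $\hat{A}_{2i}\Y_i^{(k,l)}$ depends on the factors of $\Y_i^{(k,l)} = \prod_{j \in \N_i}\X_j^{(k+1,l-1)}$ only through their $\delta$-projections.

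Given this structural fact, the plan is as follows. By hypothesis the $\delta$-limits of $\Y_i^{(k,l)}$ converge as $l \to \infty$, so $\hat{A}_{2i}\Y_i^{(k,l)}$ converges in the Hausdorff sense as a subset of $\R^2$. The remaining terms in line 5 of Algorithm \ref{alg:consensus}, namely $\X_i^{(k)}$, $\hat{E}_i \D$ and $\hat{B}_{2i}\U_{\N_i}$, do not depend on $l$, and since Pontryagin difference is continuous with respect to Hausdorff convergence on the class of compact polytopes when the minuend is fixed, $\P_i^{(k,l)}$ converges. The preimage map (lines 6--7) followed by intersection with $\X_i^{(k)}$ (line 8) then produces a convergent sequence $\X_i^{(k+1,l)}$, and hence $\Y_j^{(k,l+1)} = \prod_{i \in \N_j}\X_i^{(k+1,l)}$ converges in full, in both its $\delta$ and $\omega$ coordinates.

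The main obstacle is justifying continuity of the preimage-plus-intersection step under Hausdorff perturbations of $\P_i^{(k,l)}$. This is classical for polytopes described by a uniformly bounded number of halfspaces with bounded normals, but one must verify that the preimage does not degenerate along the sequence, for example by collapsing onto a lower-dimensional face or becoming empty. Under the hypothesis that the $\delta$-limits of $\Y_i^{(k,l)}$ converge to a nontrivial limit, and using controllability of $(\hat{A}_{1i}, \hat{B}_{1i})$ to ensure that the preimage retains full dimension whenever $\P_i^{(k,l)}$ is nonempty, these pathologies are ruled out and the continuity argument closes. An alternative, equivalent route is to work with support functions: write the support function of $\X_i^{(k+1,l)}$ as an infimal convolution built from those of $\P_i^{(k,l)}$ and $\U_i$, and observe that this is continuous in the support function of $\Y_i^{(k,l)}$ evaluated in directions aligned with the $\delta$-axis, which is exactly what the hypothesis controls.
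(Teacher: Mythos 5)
Your proposal is correct and follows essentially the same route as the paper's own (much terser) proof: the observation that $\hat{A}_{2i}$ touches only the $\delta$-coordinates of neighboring buses is exactly the paper's ``buses are only coupled through their angles,'' and the rest is the same chain $\Y_i^{(k,l)} \to \P_i^{(k,l)} \to \r_i^{(k,l)} \to \X_i^{(k+1,l)} \to \Y_j^{(k,l+1)}$ through the steps of Algorithm~2. Your version is in fact more careful than the paper's, which simply asserts the convergence propagates and never addresses the Hausdorff-continuity or non-degeneracy of the preimage-plus-intersection step that you rightly flag as the only real technical content.
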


\begin{proof} If the $\delta$ limits of $\Y_j^{(k,l)}$ converge as $l \rightarrow \infty$ for all $j \in \N_i$, then the set $\X_i^{(k+1,l)}$ converges in $l$, since buses are only coupled through their angles. If $\X_i^{(k+1,l)}$ converges in $l$ for all $i$, then $\Y_j^{(k,l+1)}$ converges in $l$ for all $j$, by definition of $\Y_j^{(k,l+1)}$ (Alg. \ref{alg:consensus}).
\end{proof}

\begin{lemma} \label{rem:nov7_2}
If the sets $\r_i^{(k,l)}$ converge as $l \rightarrow \infty,$ then the sets $\Y_i^{(k,l)}$ converge as $l \rightarrow \infty$. Similarly, if the angle limits of $\r_i^{(k,l)}$ converge, then the angle limits of $\Y_i^{(k,l)}$ converge.
\end{lemma}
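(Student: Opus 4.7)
The plan is to simply chase the dependencies through Algorithm \ref{alg:consensus}, exploiting the fact that two of the set operations involved, namely intersection with a fixed set and Cartesian product, are continuous in the Hausdorff metric on compact convex polytopes. In particular, no contraction estimate is needed here: the statement is a purely topological consequence of how $\Y_i^{(k,l+1)}$ is assembled from $\r_i^{(k,l)}$.

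First, I would fix $k$ and treat everything as a function of the consensus index $l$. Suppose $\r_i^{(k,l)} \to \r_i^{(k,\infty)}$ as $l \to \infty$ for every $i$. Line 8 of Algorithm \ref{alg:consensus} gives
\begin{equation*}
\X_i^{(k+1,l)} \;=\; \r_i^{(k,l)} \cap \X_i^{(k)},
\end{equation*}
and the set $\X_i^{(k)}$ on the right is \emph{independent of} $l$. Since intersection with a fixed compact convex polytope is continuous in the Hausdorff metric, $\X_i^{(k+1,l)}$ converges to $\r_i^{(k,\infty)} \cap \X_i^{(k)}$. Then line 9 gives
\begin{equation*}
\Y_i^{(k,l+1)} \;=\; \prod_{j \in \N_i} \X_j^{(k+1,l)},
\end{equation*}
and the Cartesian product of convergent sequences of compact sets converges to the product of the limits (again in the Hausdorff metric). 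This proves the first claim.

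For the second claim, I would invoke the same structural observation used in Lemma \ref{rem:nov7_1}: the coupling matrices $\hat{A}_{2i}$ act only on the angle components of the neighbor states, so only the $\delta$-projections of the sets $\Y_j$ ever influence downstream quantities, and conversely $\Y_j^{(k,l+1)}$ enters the next iteration only through its $\delta$-limits. Concretely, writing $Pr_\delta$ for projection onto the $\delta$-coordinates and using the fact that $\X_i^{(k)}$ is a product box in $(\delta,\omega)$ from \eqref{eqn:nov5_2}, projection commutes with intersection against such a box, so
\begin{equation*}
Pr_\delta\bigl(\X_i^{(k+1,l)}\bigr) \;=\; Pr_\delta\bigl(\r_i^{(k,l)}\bigr) \cap Pr_\delta\bigl(\X_i^{(k)}\bigr).
\end{equation*}
If the $\delta$-limits of $\r_i^{(k,l)}$ converge, then by the same continuity argument the $\delta$-limits of $\X_i^{(k+1,l)}$ converge, and hence the $\delta$-limits of $\Y_i^{(k,l+1)} = \prod_{j \in \N_i} \X_j^{(k+1,l)}$ converge as well.

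The only mild subtlety, and the step I would be most careful about, is verifying that the continuity of intersection and Cartesian product holds in the sense needed here. The sets involved are polytopes inside the fixed compact safe set $\X_i$, so they form a bounded family in the Hausdorff metric, and standard results give continuity of both operations on compact convex sets. The rest of the argument is a direct substitution.
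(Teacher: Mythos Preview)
Your argument for the first claim is essentially the paper's: convergence of $\r_i^{(k,l)}$ implies convergence of $\X_i^{(k+1,l)} = \r_i^{(k,l)} \cap \X_i^{(k)}$ (the paper phrases this as ``intersection with $\X_i^{(k)}$ cannot increase the distance between iterates''; you phrase it as continuity in the Hausdorff metric), and then the Cartesian-product step is immediate from the definition $\Y_i^{(k,l+1)} = \prod_{j \in \N_i} \X_j^{(k+1,l)}$. This part is fine and matches the paper closely.

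The second claim, however, has a concrete gap. You assert that because $\X_i^{(k)}$ is a product box,
\[
Pr_\delta\bigl(\r_i^{(k,l)} \cap \X_i^{(k)}\bigr) \;=\; Pr_\delta\bigl(\r_i^{(k,l)}\bigr) \cap Pr_\delta\bigl(\X_i^{(k)}\bigr).
\]
Two problems. First, only $\X_i^{(0)} = \X_i$ is a box by \eqref{eqn:nov5_2}; for $k \geq 1$ the set $\X_i^{(k)}$ is the output of a previous admissible-set iteration and is a general polytope. Second, even when $B$ is a product box, projection does not commute with intersection: take $A$ to be the segment from $(0,0)$ to $(1,1)$ and $B = [0,1] \times [0,\tfrac12]$; then $Pr_\delta(A \cap B) = [0,\tfrac12]$ while $Pr_\delta(A) \cap Pr_\delta(B) = [0,1]$. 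So the identity you invoke is false in general, and your deduction that the $\delta$-limits of $\X_i^{(k+1,l)}$ converge from convergence of the $\delta$-limits of $\r_i^{(k,l)}$ does not go through as written. The paper's own proof of this half is also terse (``this also holds when only the angle limits of the sets are considered''), and in the application within the proof of Theorem~\ref{thm:consensus_convergence} one actually bounds the full vertex perturbation $\|\Delta x_i\|_2$, not merely $|\Delta \delta_i|$, so the first half of the lemma already carries the load there. But if you want the second half to stand on its own, the projection-commutation step needs a different justification.
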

\begin{proof}
 If the sets $\r_i^{(k,l)}$ converge, then the sets $\X_i^{(k+1,l)}$ converge because intersection with $\X_i^{(k)}$ cannot increase the distance between iterates. By definition of $\Y_i^{(k,l)}$ (Alg. \ref{alg:consensus}), the sets $\Y_i^{(k,l)}$ converge. This also holds when only the angle limits of the sets are considered.
\end{proof}


Assume the dynamics are given by \eqref{eqn:nov5_1} and first consider the response of a single subsystem $\Sigma_i$ to a change in the coupling disturbances represented by $\Y_i^{(k,l)} \rightarrow \Y_i^{(k,l+1)}$. With Lemma \ref{rem:nov7_2} in mind, we will compute the difference in angle limits between $\r_i^{(k,l)}$ and $\r_i^{(k,l+1)}$. The polytope $\\X_i^{(k)}$ can be described in halfspace representation as \begin{align}
    \\X_i^{(k)} = \{x_i \mid F_i^{(k)} x_i \leq g_i^{(k)}\}, 
\end{align} and the preimage set of $\X_i^{(k)}$ under state-coupling disturbances $\Y_i^{(k,l)}$, input-coupling disturbances $\U$, and exogenous disturbances $\D$ is \begin{align}
    \r_i^{(k)} = \{&x_i \mid \exists u_i \in \U_i: F_i^{(k)} x_i^+ \leq g^{(k)}, \nonumber \\ 
    &\any y_i \in \Y_i^{(k,l)}, \any u_{\N_i} \in \U_{\N_i}, \any d \in \D \}. \label{eqn:nov5_4}
\end{align} A vertex of $\r^{(k)}$ is described by taking any two rows of the inequalities in \eqref{eqn:nov5_4} such that the corresponding rows of $F^{(k)}$ are linearly independent. Therefore, any vertex of $\r_i^{(k,l)}$ satisfies \begin{align}
    \bar{F}_i^{(k)}[\hat{A}_{1i}x_i + \hat{B}_{1i} u_i^* + \hat{A}_{2i}y_i^* + \hat{B}_{2i}u_{\N_i}^* + \hat{E}_id^*] = \bar{g}^{(k)} \label{eqn:nov5_5}
\end{align} where  $u_i^*, y_i^*, u_{\N_i}^*,$ and $d^*$ are chosen from the vertices of their corresponding sets, and $\bar{F}^{(k)}$ and $\bar{g}^{(k)}$ are formed from the appropriate rows of $F^{(k)}$ and $g^{(k)}.$ We note that $u_i$ is chosen optimally while $y_i, u_{\N_i},$ and $d$ are chosen adversarially, in terms of relaxing/tightening the inequality in \eqref{eqn:nov5_4}. Now, if $\Y_i^{(k,l+1)} = \Y_i^{(k,l)} + \Delta \Y_i$, then the corresponding vertex of $\r_i^{(k,l+1)}$ satisfies \begin{align}
    \bar{F}_i^{(k)}[&\hat{A}_{1i}(x_i+\Delta x_i) + \hat{B}_{1i} u_i^* \nonumber \\ 
    &+ \hat{A}_{2i}(y_i^* + \Delta y_i^*) + \hat{B}_{2i}u_{\N_i}^* + \hat{E}_id^*] = \bar{g}^{(k)}. \label{eqn:nov5_6}
\end{align} Subtracting \eqref{eqn:nov5_5} from \eqref{eqn:nov5_6} to obtain the difference between the vertex of $\r_i^{(k,l)}$ and the corresponding vertex of $\r_i^{(k,l+1)}$ yields \begin{align}
    \bar{F}_i^{(k)}[\hat{A}_{1i}\Delta x_i + \hat{A}_{2i}\Delta y_i^*] = 0.
\end{align} Now, multiplying by $(\bar{F}^{(k)})^{-1}$ and rearranging yields \begin{align}
    \Delta x_i = -\hat{A}_{1i}^{-1} \hat{A}_{2i} \Delta y_i^*
\end{align} The absolute change in the $\delta$-coordinate of the vertex in question is approximated by: \begin{align}
    |\Delta \delta_i| &\leq \|\Delta x_i\|_2 \\
    &= \|\hat{A}_{1i}^{-1} \hat{A}_{2i} \Delta y_i^*\|_2 \\
    &\leq \|\hat{A}_{1i}^{-1} \|_2 \cdot \|\hat{A}_{2i}\|_2 \cdot \| \Delta y_i^*\|_2 \\
    &= \|(I + hA_{1i})^{-1}\|_2 \cdot \|hA_{2i}\|_2 \cdot \| \Delta y_i^*\|_2 \\
    &:= h \alpha_i \|\Delta y_i^*\|_2 \label{eqn:nov7_1}
\end{align} where \begin{align}
    \alpha_i &= \|(I + hA_{1i})^{-1}\|_2 \cdot \|A_{2i}\|_2 \\
    &= [o(1) + o(h)] \|A_{2i}\|_2.
\end{align}
This shows that $|\Delta \delta_i|$ is influenced primarily by the size of the change in coupling disturbances, the strength of the coupling between subsystem $i$ and its neighbors, and the discretization time step. 

Now, consider the system as a whole. Suppose a change in the coupling disturbance sets $\{\Y_i^{(k,l)}\}_{i=1}^{N_G}$ is induced by incrementing $k$ or $l$, and suppose the largest such change for any vertex of $\Y_i^{(k,l)}$ for any $i$ has magnitude $\|\Delta y_i^{(l)}\|_2 = \eta \sqrt{2\nu}$, where $\eta$ is a positive constant and $\nu$ is the largest degree of any node in the system (so that the dimension of $\Delta y_i^{(l)}$ is $2 \nu$). From \eqref{eqn:nov7_1}, for each subsystem $i$, the change in angle limits between $\r_i^{(k,l)}$ and $\r_i^{(k,l+1)}$ is \begin{align}
    |\Delta \delta_i^{(l)}| &\leq h \alpha_i \eta \sqrt{2\nu} \\
    &\leq h \alpha \eta \sqrt{2\nu} 
\end{align}
where $\alpha = \max_i \alpha_i$. This change in angle limits can also be viewed as a secondary change in coupling disturbances. In the next iteration of Algorithm \ref{alg:consensus}, we have that for each $i$, \begin{align}
    \|\Delta y_i^{(l+1)}\|_2 &\leq \sqrt{2\nu} \cdot (h \alpha \eta \sqrt{2\nu}) \\
    |\Delta \delta_i^{(l+1)}| &\leq h \alpha \|\Delta y_i^{(l+1)}\|_2 \\ 
    &\leq \eta (h \alpha \sqrt{2\nu})^2
\end{align} and in general, 
\begin{align}
    |\Delta \delta_i^{(l+p)}| \leq \eta (h \alpha \sqrt{2\nu})^p,
\end{align} showing that if $h < \frac{1}{\alpha \sqrt{2\nu}}$, then $\sum_{l=0}^\infty |\Delta \delta_i^{(l)}|$ is bounded above by a convergent geometric series, and therefore converges. An approximate upper bound for $h$ is $\frac{1}{\max_i\|A_{2i}\| \sqrt{2\nu}}$. This proves that if $h$ is sufficiently small, the $\delta$ limits of the sets $\r_i^{(k,l)}$ converge as $l \rightarrow \infty$. By Lemma \ref{rem:nov7_2}, the $\delta$ limits of $\Y_i^{(k,l)}$ also converge. By Lemma \ref{rem:nov7_1}, the sets $\Y_i^{(k,l)}$ converge.
\qed

\end{document}